\newcommand{\copyrightstatement}{
    \begin{textblock}{0.84}(0.08,0.01)    
         \noindent
         \footnotesize
         \copyright 2023 IEEE. Personal use of this material is permitted. Permission from IEEE must be obtained for all other uses, in any current or future media, including reprinting/republishing this material for advertising or promotional purposes, creating new collective works, for resale or redistribution to servers or lists, or reuse of any copyrighted component of this work in other works.
    \end{textblock}
}
\newcommand{\tf}{d}
\newcommand{\nuser}{n}
\newcommand{\ecri}{L}
\newcommand{\E}{\mathrm{E}}
\newcommand{\gk}[1]{\left\{#1\right\}}
\newcommand{\hk}[1]{^{(#1)}}
\newcommand{\ek}[1]{\left[#1\right]}
\newcommand{\rk}[1]{\left(#1\right)}
\newcommand{\Pfrak}{\mathfrak{P}}
\renewcommand{\l}{\lambda}
\newcommand{\ex}{\mathrm{e}}
\renewcommand{\d}{\mathrm{d}}
\newcommand{\C}{\mathbb{C}}
\newcommand{\N}{\mathbb{N}}
\newcommand{\R}{\mathbb{R}}
\newcommand{\Mcal}   {{\mathcal M }}
\title{An Advanced Tree Algorithm with Interference Cancellation in Uplink and Downlink}
\author{\IEEEauthorblockN{Quirin Vogel\IEEEauthorrefmark{1},Yash Deshpande\IEEEauthorrefmark{2}, \v Cedomir Stefanovi\' c\IEEEauthorrefmark{3}, Wolfgang Kellerer\IEEEauthorrefmark{2}}\\
\IEEEauthorblockA{
\IEEEauthorrefmark{1}Department of Mathematics, School of Computation, Information and Technology, Technical University of Munich, Germany\\
\IEEEauthorrefmark{2}Chair of Communication Networks, School of Computation, Information and Technology, Technical University of Munich, Germany\\
\IEEEauthorrefmark{3}Department of Electronic Systems, Aalborg University, Denmark \\
Email:\{quirin.vogel, yash.deshpande,wolfgang.kellerer\}@tum.de, cs@es.aau.dk}}
\begin{document}
\copyrightstatement
\maketitle

\begin{abstract}

In this paper, we propose \ac{ATIC}, a variant of \ac{SICTA} introduced by Yu and Giannakis.
\ac{ATIC} assumes that \ac{IC} can be performed both by the \ac{AP}, as in \ac{SICTA}, but also by the users. 
Specifically, after every collision slot, the \ac{AP} broadcasts the observed collision as feedback. Users who participated in the collision then attempt to perform \ac{IC} by subtracting their transmissions from the collision signal.
This way, the users can resolve collisions of degree 2 and, using a simple distributed arbitration algorithm based on user IDs, ensure that the next slot will contain just a single transmission.
We show that \ac{ATIC} reaches the asymptotic throughput of 0.924 as the number of initially collided users tends to infinity and reduces the number of collisions and packet delay. 
We also compare \ac{ATIC} with other tree algorithms and indicate the extra feedback resources it requires. 

\end{abstract}

\begin{IEEEkeywords}
medium access algorithms, wireless communications, random access, 5G. 
\end{IEEEkeywords}

\acresetall
\section{Introduction}
\label{sec:introduction}
\ac{RA} algorithms are crucial for managing a large number of devices that make up the \ac{IoT} ecosystem.
IoT scenarios are often characterized by sparse, intermittent packet arrivals and short packet duration~\cite{5gtrafficsurvey}. 
In contrast to scheduling approaches, \ac{IoT} devices via \ac{RA} algorithms can transmit their data packets without having to wait for a predetermined time slot, eliminating the signalling overhead required to register devices and schedule packets.
In particular, by efficiently managing the transmission of short packets, \ac{RA} algorithms can help reduce the overall latency of the \ac{IoT} network, improving the user experience for applications such as real-time monitoring and control~\cite{Vilgelm2021}.
The main challenge for \ac{RA} algorithms is to avoid collisions that occur when multiple devices try to transmit their data simultaneously, and the major families of \ac{RA} algorithms differ in the way how the collisions are handled. 


The idea behind tree algorithms~\cite{capetanakis1979tree, massey1981collision}  is to handle collisions by successively partitioning the colliding users into smaller groups. 
Execution of tree algorithms requires broadcast of the feedback from the \ac{AP} after every uplink slot.
They exhibit stability until a certain value of the aggregated user arrival rate per uplink slot, which is denoted as the \ac{MST}.
For the basic variant of the algorithm, which is \ac{BTA}, the \ac{MST} is 0.346 packets per slot ~\cite{capetanakis1979tree}.
Through optimizations of the splitting factor and the channel access policy, the \ac{MST} of tree algorithms can be increased to 0.4878 packets/slot ~\cite{verdu1986}.

Advanced signal processing techniques, such as interference cancellation, further improve the \ac{MST} of tree-algorithms~\cite{yu2005sicta}. 
Specifically, in~\ac{SICTA}, 
after decoding a user packet, the \ac{AP} successively applies interference cancellation over the previously received and stored collision slots to remove the interference contribution of the decoded user, which potentially enables the decoding of new packets and new rounds of \ac{SIC}.
The \ac{MST} of \ac{SICTA} is 0.693 packets/slot.

In this paper, we propose an algorithm called \ac{ATIC}, 
    that, leveraging enhanced feedback and \ac{IC} capabilities at the users, resolves collisions of degree 2 using just 1 extra uplink slot.
    This way, the algorithm can achieve \ac{MST} of 0.924, as shown in the paper, 
     exceeding the \ac{MST} of \ac{SICTA} by over 33\%. 
    We also show that gated access is the best access method for the proposed algorithm.
    To compare the extra feedback cost of the proposed algorithm, we assess the amount of resources required in the feedback for the relevant classes of tree algorithms. 
    Finally, we show that the mean packet delay\footnote{The mean number of uplink slots needed to successfully decode the packet after its arrival at a user.} of the proposed algorithm is significantly less than the one of SICTA for significantly higher arrival rates. 

The rest of the paper is organized as follows.
Section~\ref{sec:backgound} provides the background and an overview of the related work.
Section~\ref{sec:system_model} states the system model explains the proposed algorithm \ac{ATIC}. 
Section~\ref{sec:analysis} derives the \ac{MST} of \ac{ATIC}.
We show some simulation-based evaluation of \ac{ATIC} in terms of the required resources and mean packet delay in Section~\ref{sec:evaluation}. 
Finally, we conclude the paper by discussing our findings in Section~\ref{sec:conclusion}. 

\section{Background and Motivation}
\label{sec:backgound}



\subsection{Standard Tree Algorithms}

The \ac{BTA}, also known as the Capetanakis-Tsybakov-Mikhailov type \ac{CRP} was proposed in~\cite{capetanakis1979tree}. 
In \ac{BTA}, the users decide on which slot to transmit based on the feedback from the \ac{AP}.
The feedback 
for any slot $t$ is ternary $F[t] \in \{0,1,e\},\,\, \forall t \in \mathbb{Z}^{+}$, denoting whether the outcome of the slot was an idle $F[t]=0$, a success $F[t]=1$, or a collision $F[t]=e$, respectively. 

The \ac{CRP} starts by $\nuser$ users transmitting their packets for the first time in the coming slot, thus marking the beginning of a \ac{CRI} (for which the slot counter $t$ is set to 0).
The moments of the initial transmissions are determined by the \ac{CAP}, which will be elaborated later.
If $\nuser \geq 2$, the slot is a collision, i.e., $F[t]=e$, and the $\nuser$ users independently split into two groups, e.g. group 0 and group 1, according to the outcome of a Bernoulli trial.
The users who have joined group 0 transmit in the next slot, $t+1$, while users who have joined group 1 abstain from transmitting and observe the feedback until users from group 0 are resolved.
If this slot is also a collision, $F[t+1]=e$, then the process of partitioning the users further is done recursively.
Users who have joined group 1 wait until all the users in group 0 have successfully transmitted their packets to the \ac{AP}, after which they transmit in the next slot and the resolution process continues.
The \ac{CRI} ends when all $\nuser$ users become resolved.
The \ac{BTA} algorithm can be implemented in each user by means of a simple counter~\cite{capetanakis1979tree}. 

One can represent the progression of a \ac{CRI} in terms of full-binary trees, as shown in Fig.~\ref{fig:tree_example}.
Here, we show an example with $\nuser=4$, where the users are labelled with A, B, C, D.
Each node on the tree represents a slot.
The labels inside the node show the users that have transmitted in that slot. 
A node with an empty label inside is idle, while a node with only one label inside is a success. 
The numbers outside the node represent the slot number in \ac{BTA}. 
Hence, the \ac{CRI} in this example took 9 slots. 
Leaf nodes are either successes or idle slots while all internal nodes are collisions.
The tree structure allows for a recursive analysis of the properties of the \ac{CRP}. 

\begin{figure}
    \centering
    \includegraphics[width=\columnwidth]{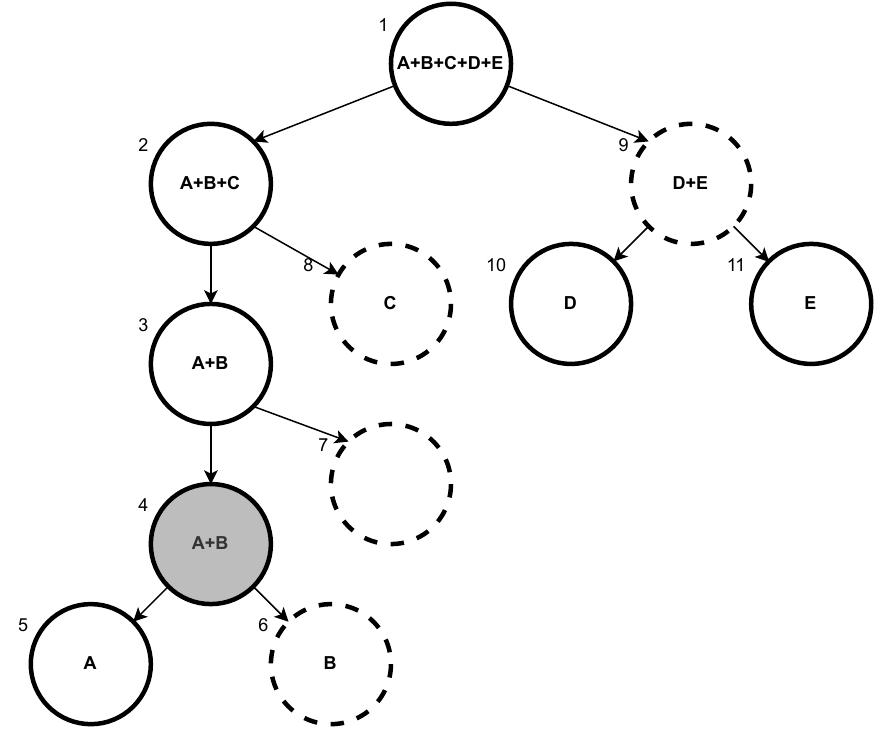}
    \caption{An example visualization of different tree algorithms. At each collision of more than 2 signals, the users independently split into two groups the users that selected the left group transmit once again. The skipped slots in SICTA are drawn with dotted lines. In \ac{ATIC}, collisions of degree two (skipped or not) are resolved deterministically. \ac{ATIC} is able to skip slot $4$, as $A$ and $B$ can infer that they are in a collision of degree two. Slot $4$ and $5$ hence become deterministic and so do $10$ and $11$.}
    \label{fig:tree_example}
\end{figure}

A full-fledged tree algorithm combines the \ac{CRP} with a \ac{CAP} that determines how and when the arriving users will transmit their packets for the first time. 
The common \ac{CAP}s are gated, windowed, and free access. 
In gated access (also called blocked access), when a \ac{CRI} is in progress, all newly arrived users wait, i.e., they are blocked. 
The blocked users transmit on the channel in the next slot 
after the current \ac{CRI} ends. 
In windowed access, the time is divided into windows whose length is $\Delta$ slots, $ \Delta \in \mathbb{R}^{+}$.
The users arriving in the $k$-th time window join the next \ac{CRI} after the \ac{CRI} of the users that arrived in the $(k-1)$-th window is over.
In free access, a newly arrived user simply transmits on the channel in the first slot after its arrival. 
It thus joins an ongoing \ac{CRI}, if there is one. 
For all the three \ac{CAP}s, the main performance parameter is the \ac{MST}. 
If the total packet arrival rate in the network $\lambda>0$ is less than the \ac{MST}, then the \ac{RA} scheme will be stable. 

It was shown in~\cite{capetanakis1979tree} that the \ac{MST} of \ac{BTA} with gated access is 0.346 packets per slot.
This \ac{MST} is improved by free access to 0.392 packets/slot~\cite{mathys1985q} and by windowed access to 0.429 packets/slot~\cite{massey1981collision}.
An improvement to the \ac{BTA} called \ac{MTA} was suggested in~\cite{massey1981collision} where a definite collision can be skipped.
If there is an idle slot after a collision, the next slot is a definite collision, and the users can avoid transmitting on this slot by randomly splitting once again. 
For example, slot 4 in Fig. \ref{fig:tree_example} is a definite collision and will be skipped in \ac{MTA}. 
Ternary \ac{MTA} with biased splitting was shown to be the optimum choice~\cite{mathys1985q} among \ac{MTA}.
In biased splitting, the probability of a user joining any of the $\tf$ groups is not uniform. 
\ac{MTA} with clipped access (a version of windowed access) was introduced in~\cite{verdu1986} and achieves \ac{MST} of 0.4878 packets/slot.

\subsection{SICTA}
The collisions in \ac{BTA} are discarded by the \ac{AP}.
One way to improve the \ac{MST} of tree algorithms was suggested in ~\cite{yu2005sicta,yu2007high}, called \ac{SICTA}. 
Here the signals of all collisions are saved by the \ac{AP}.
The \ac{AP} can then resolve more than one packet per slot by successively subtracting the signals of previously decoded packets (i.e., cancelling their interference) from the previously stored collisions.

To illustrate how certain slots from the \ac{BTA} will be skipped in \ac{SICTA}, we use the example in Fig. \ref{fig:tree_example} and use the same slot number (labeled outside the node) as the ones in the figure.
Let $Y_{t}$ be the signal (received by the AP) of slot $t$ and $X_{i}$ be the signal from user $i$.
The algorithm will proceed as \ac{MTA} until slot 6.
The \ac{AP} will save the signal from the collisions in its memory. 
Thus the 3 signals in its memory are $Y_{1} = X_{A} + X_{B} + X_{C} + X_{D}$, $Y_{2} = X_{A} + X_{B} + X_{C}$, $Y_{5} = X_{A} + X_{B}$. 
It can decode the signal from user A since it was the only user to transmit in slot 6, $Y_{6} = X_{A}$. 
As soon as there is a success in slot 6, the \ac{AP} will subtract the known signal of user $A$ from all the 3 previous saved collisions. 
After this step, the \ac{AP} can also decode $X_{B}$ since $Y_{5} - X_{A} = X_{B}$. 
The \ac{AP} then proceeds to subtract $X_{B}$ from its known collisions and is able to decode the signal of user C from the collision in the second slot. 
Finally, the \ac{AP} is also able to decode the signal from user D after subtracting $X_{C}$ from the first collision.
Since all the users from the initial collision are decoded, the \ac{CRI} is over after slot 6. 
Thus using \ac{SIC}, one can skip 4 slots out of the 9 from \ac{BTA} tree. 
The skipped slots in SICTA are marked with dotted circles.

The \ac{MST} of \ac{SICTA} with blocked access was shown to be $0.693$~\cite{yu2005sicta}.
It was also shown that neither free access nor windowed access help in increasing the \ac{MST} of \ac{SICTA} above the one of blocked access~\cite{peeters2009}.
Splitting beyond binary, i.e., $\tf-$ary \ac{SICTA} reduces the \ac{MST} for fair splitting.
However, if the probability of joining each user follows a special distribution, \ac{MST} of 0.693 can be achieved for any splitting value of $\tf$~\cite{deshpande2022correction, vogel2023analysis}.

The massive improvement of \ac{SICTA} over \ac{BTA} in terms of the \ac{MST} comes at the cost of higher complexity at the \ac{AP}.
The \ac{AP} must be able to save collision signals in its memory and be able to perform \ac{SIC} at the end of every slot.
The broadcast feedback by the \ac{AP} is also more complex, $F[t] \in \{0,k,e\}$ where $k$ is an integer indicating the number of slots that should be skipped by all active users in the \ac{CRI} after every success.

\begin{figure}[t]
    \centering
\begin{tikzpicture}

\definecolor{color0}{rgb}{0.12156862745098,0.466666666666667,0.705882352941177}

\begin{axis}[
width={\columnwidth}, height={5cm},
tick align=outside,
tick pos=left,
x grid style={white!69.0196078431373!black},
xlabel={Slot Degree},
xmajorgrids,
xmin=1, xmax=10,
xtick style={color=black},
xtick={0,1,2,3,4,5,6,7,8,9,10},
y grid style={white!69.0196078431373!black},
ymajorgrids,
ymin=0, ymax=0.578445945945946,
ytick style={color=black},
ytick={0,0.1,0.2,0.3,0.4,0.5,0.6},
yticklabels={0.0,0.1,0.2,0.3,0.4,0.5,0.6}
]
\draw[draw=none,fill=color0] (axis cs:-0.5,0) rectangle (axis cs:0.5,0);

\draw[draw=none,fill=color0] (axis cs:0.5,0) rectangle (axis cs:1.5,0);
\draw[draw=none,fill=color0] (axis cs:1.5,0) rectangle (axis cs:2.5,0.550900900900901);
\draw[draw=none,fill=color0] (axis cs:2.5,0) rectangle (axis cs:3.5,0.193243243243243);
\draw[draw=none,fill=color0] (axis cs:3.5,0) rectangle (axis cs:4.5,0.0970720720720721);
\draw[draw=none,fill=color0] (axis cs:4.5,0) rectangle (axis cs:5.5,0.0547297297297297);
\draw[draw=none,fill=color0] (axis cs:5.5,0) rectangle (axis cs:6.5,0.0382882882882883);
\draw[draw=none,fill=color0] (axis cs:6.5,0) rectangle (axis cs:7.5,0.0279279279279279);
\draw[draw=none,fill=color0] (axis cs:7.5,0) rectangle (axis cs:8.5,0.0227477477477477);
\draw[draw=none,fill=color0] (axis cs:8.5,0) rectangle (axis cs:9.5,0.0150900900900901);
\end{axis}

\end{tikzpicture}
    \caption{Simulated normalized collision degree distribution for SICTA with blocked access for 100000 collision slots. The packet arrival rate in the network is 0.693 which is the \ac{MST} for this scheme. We see that more than half the collisions have a multiplicity of 2.} 
    \label{fig:slot_degree_distr}
\end{figure}
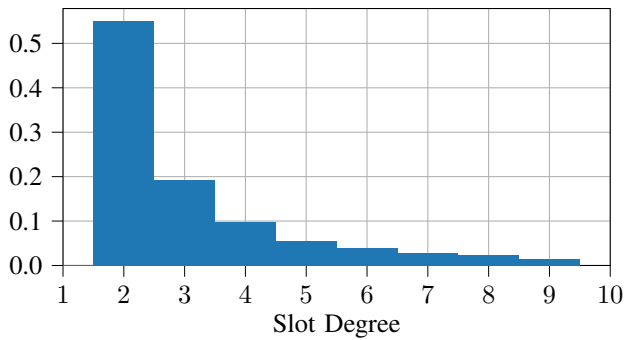

We now turn to the motivation for the scheme proposed in this paper.
Fig.~\ref{fig:slot_degree_distr} shows the distribution of the number of packets in a collision slot, denoted as the collision degree distribution, for the \ac{SICTA} algorithm with blocked access for the \ac{MST} simulated over 100000 slots.
Collisions of degree two are the most common in \ac{SICTA}.
On the other hand, the throughput\footnote{Throughput is the measure of the efficiency and is calculated as the ratio of the number of decoded packets and the number of slots required for their successful decoding.}
 of \ac{SICTA} is the lowest~\cite[Table II]{yu2007high} when resolving collision of degree 2.
In contrast, the scheme proposed in this paper has a throughput of 1 packet/slot when resolving collisions of degree 2.
This is achieved by sending more information to users in the feedback and assuming that the users are also able to perform \ac{IC},
as further elaborated in Section~\ref{sec:system_model}.



\section{System Model}
\label{sec:system_model}



We consider a large user population in the wireless range of an \ac{AP}.
Time in the system is divided into slots of equal duration.
Throughout the paper, we express all time-related quantities in terms of slots denoted by an index $t \in \mathbb{Z}^{+}$. 

Data packets arrive randomly at users, and we assume that each arrival happens at a different user and that the total number of arrivals in a slot is Poisson distributed random variable.
The users experiencing packet arrivals transmit the packets to the \ac{AP} over a shared wireless channel.
A user is said to be active when it has a packet to send to the \ac{AP}. 
Throughout this paper, we also refer to packets as users. 
All packets are of equal length. 
Each user has a unique identity, and this value is appended to the data packet. 
The transmission rate of the users is such that it takes exactly one slot to transmit one packet. 

The wireless channel is \textit{interference-limited}, modeled as the standard collision channel in the uplink as well as downlink.
Therefore, if only one user transmits in a slot, the data from the packet can be successfully decoded.
If more than one user transmits in a slot, their signals interfere, and the \ac{AP} cannot decode any of them; this scenario is called a collision. 
If no user transmits in the slot, then we say that the slot is idle.  
The \ac{AP} broadcasts immediate and instantaneous feedback to all the users in the network. 
We discuss the contents of this feedback in the next subsection. 
A user is said to be active when it has a packet in its queue to send to the \ac{AP}. 


Users are equipped with an internal memory that stores two signals: their own signal and a received signal from the \ac{AP}.

For every slot $t$, the \ac{AP} broadcasts a signal indicating the feedback $F[t]\in\{0,k,e\}$ \textit{and} a signal $Z_{t}$ to all the users in the network. If $F[t]=e$, then $Z_t=Y_t$, i.e., the signal that was just received by the \ac{AP}.
If $F[t]=k$, the \ac{AP} broadcasts $Z_t=Y_{s}-Y$, where $s<t$ and $Y_{s}$ is the most recent signal that is unresolved after performing \ac{SIC} and $Y$ is the sum of all signals resolved in between slot $s$ and slot $t$. Finally, if $F[t]=0$, then $Z_t=\emptyset$.

The users keep the last received valid $Z_{t-p} \neq \emptyset, p \in \mathbb{N}^{+} $ and their own signal $X_i$ in their memory.  
Upon a collision or success, the active users proceed to cancel their own signal $X_i$ from the broadcasted signal.
If $F[t]=e$ and the active user $X_i$ can resolve the signal  $Y_t-X_i$, $Y_t$ must have been a collision of degree two. If $F[t]=k$ and the active user $i$ can resolve the signal  $Z_t-X_i$, then $Y_s-Y-X_i$ is a decodable signal and user $i$ is in a skipped slot of degree two with the signal $Y_s-Y$.

Thus, the algorithm allows the users to identify that only two users are left in the current subtree as well as the other user's signal.
Once this information is available to both contending users, it only remains to ensure that they avoid a collision in slot $t+1$.
As mentioned previously, each user has a unique ID that is appended to the data packet.
A pre-defined hierarchical system (for example, rank users according to the value of their ID) can then be used to decide in a distributed manner which of the two contending users should transmit in slot $t+1$.
Both the users will be decoded by the \ac{AP} in slot $t+1$ as the \ac{AP} can subtract $Y_{t+1}$ from $Y_{t}$.

We illustrate \ac{ATIC} using the example from Fig.~\ref{fig:tree_example}. 
In slot 3, the \ac{AP} broadcasts $X_A+X_B$ and $F[3]=e$. This allows $A$ and $B$ to cancel their own signal from $X_A+X_B$. In the example, $A$ finds its user ID to be higher than that of $B$ and transmits in slot 5. Note that slot 4 is skipped in \ac{ATIC}, which is not the case for \ac{SICTA}.  
The same process is done by user $B$ which finds its user ID less than that of $A$'s and decides to abstain from transmitting in slot 5, fully knowing that $A$ will transmit and its own signal will be resolved along with that of user B's after slot 5. After slot $5$, the \ac{AP} broadcasts the signal $X_D+X_E$, which is the difference between $Y_1=X_A+X_B+X_C+X_D+X_E$ and the sum of the signal decoded between slot 1 and 5, which is $X_A+X_B+X_C$. Users $D$ and $E$ can then perform \ac{SIC}, with the result that $D$ finds its user ID to be higher and broadcasts first.



Note that the throughput of the algorithm described above is equivalent to the following: the \ac{AP} \textit{always} broadcasts the signal it has received, i.e. collisions and singletons. It hence does not require the \ac{AP} to calculate \textit{and} broadcast unresolved signals, for example to infer that $(X_A+X_B+X_C)-X_A$ is equal to $X_B+X_C$.
Users store all broadcasted signals in their internal memory and perform \ac{SIC} like the \ac{AP} with the additional knowledge of their own signals in the \ac{SIC} process. 
The version of \ac{ATIC} introduced previously has the same throughput as this variant. 
However, it requires users to reserve a large chunk of memory for storing the received signal and the users need have essentially the same computational capacity as the \ac{AP} to perform \ac{SIC}. 
Below we discuss a variant that does not come at the cost of increased storage capacities but uses the same feedback.
\subsection*{Simpler variant with a lower throughput}
We can still achieve a significant increase in throughput over SICTA if we apply \ac{ATIC} to only non-skipped collisions  i.e., on the left subtree. 
In this variant, the \ac{AP} only broadcasts the received signal \textit{if} it is a collision, i.e., $F[t]=e$. 
In Fig.~\ref{fig:tree_example}, this still causes slot $4$ to be skipped, as here we have a collision of degree two which is in the left group. 
However, slots 10 and 11 are no longer deterministic in this version of \ac{ATIC}, since the signal $X_D+X_E$ is actually never broadcasted. 
Therefore, we only improve throughput in collisions of degree 2, that occur on the left nodes of branches. 
Analogous to the calculations performed in Section \ref{sec:analysis}, we can show that the throughput of this variant with fair splitting is $6\log(2)/5\approx 0.832$, which is a 20\% gain over SICTA. 

\section{Analysis}
\label{sec:analysis}

Let $l_n$ be the conditional \ac{CRI} length, i.e., the \ac{CRI} length given $n$ users have collided in the first slot.
The evolution $l_n$ can be expressed recursively as,
\begin{equation}
    l_n=\begin{cases}
    1&\text{ if }n=0,1\\
    2&\text{ if }n=2\\
    l_{i}+l_{n-i}&\text{ if }n\ge 3
    \end{cases}
    \label{eq:cri_evol}
\end{equation}
where $i$ is the (random) number of users out of $n$ which have chosen group 0.
We assume that each packet chooses group 0 independently of one another and with probability $p\in(0,1)$.
We are interested in finding the expected \ac{CRI} length conditioned on the fact that $\nuser$ users have collided in the first slot, $L_n=\E\ek{l_n}$.
The expected \ac{CRI} length allows us to express the conditional throughput $T_n = \frac{\nuser}{L_n}$, which measures the average efficiency of resource utilization of the \ac{CRP}.

\subsection{Closed-form equation of $L_{\nuser}$}
\label{subsec:closed-form}


Using \eqref{eq:cri_evol} one can derive a recursive equation for $L_{n}$.
However, we skip this derivation since a closed-form expression exists for $L_n$. 

\begin{theorem}\label{Thm:ClosedExpLn}
Let $q=1-p$ and set $r={2-4pq-3\rk{p^2+q^2}}$.
We then have that for every $n\ge 0$
\begin{equation}
      L_n=1+\sum_{i= 2}^n\binom{n}{i}\frac{\rk{-1}^i\rk{i-1+ri(i-1)/2}}{1-p^i-q^i}\, \cdot
      \label{eq:thm1}
\end{equation}
\end{theorem}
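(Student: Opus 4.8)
The plan is to turn the probabilistic recursion \eqref{eq:cri_evol} into a single linear recursion for $L_n$, encode it in an exponential generating function, and then diagonalize that functional equation by a binomial transform; the factor $\binom{n}{i}(-1)^i$ appearing in \eqref{eq:thm1} is exactly the fingerprint of such a transform, which is what suggests this route.

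First I would take expectations in \eqref{eq:cri_evol}. With the split size $i$ distributed as $\mathrm{Binomial}(n,p)$ one gets $L_n=\sum_{i=0}^n\binom{n}{i}p^iq^{n-i}(L_i+L_{n-i})$ for $n\ge3$, and relabelling $i\mapsto n-i$ in the $L_{n-i}$ part symmetrizes this to
\[
L_n=\sum_{i=0}^n\binom{n}{i}\rk{p^iq^{n-i}+p^{n-i}q^i}L_i,\qquad n\ge3,
\]
with the genuinely non-recursive base data $L_0=L_1=1$, $L_2=2$. Setting $A(x)=\sum_{n\ge0}L_n x^n/n!$, the right-hand side is precisely the coefficient of $x^n/n!$ in $A(px)e^{qx}+A(qx)e^{px}$, so $A$ and this expression agree in every degree $n\ge3$ and can differ only in degrees $0,1,2$.

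Computing those three low-degree coefficients of $A(px)e^{qx}+A(qx)e^{px}$ and subtracting the true values $L_0,L_1,L_2$ produces a degree-$2$ correction polynomial, giving
\[
A(x)=A(px)e^{qx}+A(qx)e^{px}-1-x+\tfrac{r}{2}x^2 .
\]
The $x^2$-coefficient is the degree-$2$ discrepancy $L_2-\rk{3(p^2+q^2)+4pq}=2-4pq-3(p^2+q^2)$, which is exactly $r$; this is where the constant $r$ of the statement is born. Now the key manoeuvre: replace $x$ by $-x$ and multiply through by $e^x$. Since $e^{x}e^{-qx}=e^{px}$ and $e^{x}e^{-px}=e^{qx}$, the transformed generating function $\tilde A(x):=e^xA(-x)=\sum_{n\ge0}\ell_n x^n/n!$, whose coefficients are the binomial transform $\ell_n=\sum_{i=0}^n\binom{n}{i}(-1)^iL_i$, obeys the diagonalized equation
\[
\tilde A(x)=\tilde A(px)+\tilde A(qx)+e^x\rk{-1+x+\tfrac{r}{2}x^2}.
\]
Reading off the coefficient of $x^n/n!$ (and expanding $e^x(-1+x+\tfrac r2 x^2)$) yields $\ell_n\rk{1-p^n-q^n}=(n-1)+\tfrac r2 n(n-1)$, so $\ell_n=\bigl[(n-1)+rn(n-1)/2\bigr]/(1-p^n-q^n)$ for $n\ge2$ and $\ell_0=1$. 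Inverting the involutive binomial transform, $L_n=\sum_{i=0}^n\binom{n}{i}(-1)^i\ell_i$, and discarding the vanishing $i=1$ term then reproduces \eqref{eq:thm1}.

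The hard part is the degenerate index $n=1$: there $1-p^1-q^1=0$, so the coefficient identity degenerates to $0=0$ and fails to pin down $\ell_1$. I would close this gap from the initial data, $\ell_1=L_0-L_1=0$, which is precisely why no $i=1$ term survives in \eqref{eq:thm1}. The only other things to check are that the generating-function identities hold coefficientwise as formal power series (so no convergence issue arises) and the routine low-degree bookkeeping that fixes the correction polynomial. As an independent sanity check one can verify \eqref{eq:thm1} directly for $n=0,1,2,3$; alternatively, avoiding generating functions, one could substitute \eqref{eq:thm1} into the symmetrized recursion and collapse it using the binomial identities for $\sum_i\binom{n}{i}(p^iq^{n-i}+p^{n-i}q^i)\binom{i}{2}$ and $\sum_i\binom{n}{i}(p^iq^{n-i}+p^{n-i}q^i)(i-1)$, a correct but noticeably more computational path.
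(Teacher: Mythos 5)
Your proposal is correct and takes essentially the same route as the paper: the paper's proof likewise converts the recursion into a functional equation for an exponential-type generating function and solves it by coefficient comparison plus transform inversion, the only cosmetic difference being that the paper first works with the bivariate function $Q(x,z)$ and differentiates at $z=1$ to obtain the Poisson-transformed mean $L(x)=e^{-x}\sum_{n\ge 0}L_n x^n/n!$, whereas you take expectations immediately. Since $L(x)$ satisfies $L(x)=L(px)+L(qx)-e^{-x}\left(1+x-\tfrac{r}{2}x^2\right)$, which is exactly your equation for $A$ multiplied by $e^{-x}$, and your transformed function $\tilde A(x)=e^{x}A(-x)$ is precisely $L(-x)$, your coefficient identity for $\ell_n$ and the binomial-transform inversion coincide term by term with the paper's computation of $\alpha_n$ and the expansion $L_n=\sum_i \alpha_i\, n!/(n-i)!$ (via $\ell_i=(-1)^i\, i!\,\alpha_i$), including the same use of the initial data to fix the degenerate low-order terms.
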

\begin{proof}
    The proof is given in Appendix A.
\end{proof}

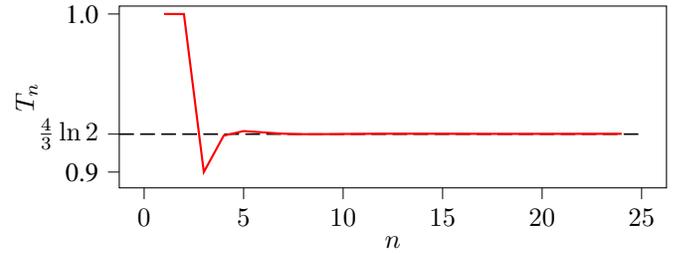
\begin{figure}[t]
    \centering
\begin{tikzpicture}

\definecolor{color0}{rgb}{0.12156862745098,0.466666666666667,0.705882352941177}

\begin{axis}[
width={\columnwidth}, height={4cm},
legend cell align={left},
legend style={fill opacity=0.8, draw opacity=1, text opacity=1, draw=white!80!black},
tick align=outside,
tick pos=left,
x grid style={white!69.0196078431373!black},
xlabel={$\nuser$},
xmin=-1.25, xmax=26.25,
xtick style={color=black},
y grid style={white!69.0196078431373!black},
ylabel={$T_{n}$},
ymin=0.89, ymax=1.005,
ytick style={color=black},
ytick={0.9, 0.924, 1.0},
yticklabels={0.9, $\frac{4}{3}\ln{2}$, 1.0},
]
\path [draw=black, semithick, dash pattern=on 5.55pt off 2.4pt]
(axis cs:-1.25,0.924)
--(axis cs:25,0.924);

\addplot [thick, red, forget plot]
table {%
1 1
2 1
3 0.9
4 0.923076923076923
5 0.925925925925926
6 0.925066312997347
7 0.924265779652766
8 0.923974676944973
9 0.923987859755998
10 0.924097597470287
11 0.924195897551348
12 0.924249260650131
13 0.924261232226415
14 0.924247458760778
15 0.924223328101189
16 0.924199576601167
17 0.92418189969132
18 0.924172115344972
19 0.924169623889469
20 0.924172637781192
21 0.924179037207664
22 0.924186877927958
23 0.924194635376452
24 0.924201273691125
};

\end{axis}

\end{tikzpicture}
    \caption{The throughput conditioned on a collision of $\nuser$ users.
    The conditional throughput is 1 when $0<\nuser\leq2$ and tends to $\frac{4}{3}\ln{2}$ as $\nuser\longrightarrow\infty$.
    The lowest conditional throughput of $0.9$ is achieved for $n=3$.} 
    \label{fig:tpt}
\end{figure}
Fig.~\ref{fig:tpt} shows the conditional throughput $T_n$ for different values of $n$ according to \eqref{eq:thm1} in the case $p=q=1/2$.
In the worst case, $T_n = 0.9$ for when $n=3$.
In comparison, in the best case, $T_n$ of \ac{SICTA} is $0.693$ or $\ln 2$.
Moreover, $T_n$ of \ac{SICTA} for $\nuser =2$ is $0.66$, while $T_n=1$ for $\nuser =2$ for the proposed algorithm.

\subsection{Asymptotic Behaviour}
\label{subsec:asymptotic_behaviour}

The closed-form expression~\eqref{eq:thm1} is useful for calculating the behavior of $L_n$ and $T_n$  when $n$ is small.
However, as $n$ grows, the number of summands in~\eqref{eq:thm1} increases and the computation becomes numerically challenging due to cancellation effects.

On the other hand, Fig.~\ref{fig:tpt} suggests that the value of $T_n$ seems to settle at $\frac{4}{3}\ln 2$ as $\nuser$ grows.
Here we formally prove that, asymptotically, $T_n$ tends to $\frac{4}{3}\ln 2$, save an oscillatory component of rather small amplitude.

\begin{theorem}\label{Thm:Asympto}
The throughput $n/L_n$ of \ac{ATIC} is asymptotically maximized for $p=q=1/2$. For this value, we have
\begin{equation}
    T_n=\frac{n}{L_n}= \frac{4}{3}\ln(2)+g(n)+o(1)\, ,
\end{equation}
where $g(n)$ is a small sine-like perturbation, as in \cite{mathys1985q}, usually between $10^{-3}$ and $10^{-6}$. 
\end{theorem}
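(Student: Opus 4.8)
The plan is to treat the finite alternating binomial sum in \eqref{eq:thm1} with the Nörlund--Rice method, exactly as in the analysis of the $Q$-ary tree algorithm in \cite{mathys1985q}. Writing $a(s)=\frac{s-1+r\,s(s-1)/2}{1-p^{s}-q^{s}}$ for the meromorphic continuation of the summand, Theorem~\ref{Thm:ClosedExpLn} reads $L_n=1+\sum_{i=2}^{n}\binom{n}{i}(-1)^{i}a(i)$, and Rice's integral represents this sum as
\[
\frac{(-1)^n}{2\pi i}\oint_{C} a(s)\,\frac{n!}{s(s-1)\cdots(s-n)}\,ds ,
\]
where $C$ is a positively oriented contour enclosing $\{2,\dots,n\}$. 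I would then push the left side of $C$ to a vertical line $\mathrm{Re}(s)=c$ with $c\in(0,1)$, so that the asymptotics of $L_n$ are governed by the residues swept in the strip $c<\mathrm{Re}(s)<2$, plus a remainder integral on $\mathrm{Re}(s)=c$ of size $O(n^{c})=o(n)$.

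Second, I would locate the poles of $a$, i.e.\ the zeros of $1-p^{s}-q^{s}$. For $p=q=\tfrac12$ these are precisely $s_k=1-2\pi i k/\ln 2$, $k\in\mathbb Z$, all lying on the line $\mathrm{Re}(s)=1$. The key observation is that the central zero $s_0=1$ is \emph{removable}: the numerator factors as $s-1+r\,s(s-1)/2=(s-1)(1+rs/2)$ and also vanishes there, so $a$ is regular at $s=1$ with $a(1)=\frac{1+r/2}{-(p\ln p+q\ln q)}$. Using $p+q=1$ one simplifies $r=2pq-1$, hence $1+r/2=\tfrac12+pq$, and at $p=q=\tfrac12$ this gives $a(1)=\frac{3/4}{\ln 2}=\frac{3}{4\ln 2}$. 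The \emph{kernel} $n!/\prod_{j=0}^n(s-j)$ nevertheless has a simple pole at $s=1$, so the residue there contributes the main linear term $a(1)\,n$ to $L_n$. Each remaining pole $s_k$ ($k\neq0$) is a simple pole of $a$, and the standard estimate $\frac{n!}{\prod_{j=0}^n(s-j)}\sim(-1)^n\frac{\Gamma(1-s)}{s}\,n^{s}$ turns its residue into a term of size $n^{s_k}=n\,e^{i\tau_k\ln n}$ with $\tau_k=-2\pi k/\ln2$. Collecting everything,
\[
L_n=\frac{3}{4\ln 2}\,n+n\sum_{k\neq0}c_k\,e^{i\tau_k\ln n}+O(n^{c}),
\]
with $c_k$ proportional to $\bigl(\mathrm{Res}_{s_k}a(s)\bigr)\Gamma(1-s_k)/s_k$. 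Dividing yields $T_n=n/L_n=\tfrac43\ln2+g(n)+o(1)$, where $g$ is a periodic function of $\ln n$; since $|\Gamma(1-s_k)|$ decays like $e^{-\pi|\mathrm{Im}(s_k)|/2}$, the coefficients $c_k$ are minuscule, which is exactly why $g$ is the sine-like perturbation of amplitude usually between $10^{-3}$ and $10^{-6}$ reported in \cite{mathys1985q}.

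Third, for the optimality claim I would study the leading coefficient $\phi(p):=1/a(1)=\frac{-(p\ln p+q\ln q)}{\tfrac12+pq}$ over $p\in(0,1)$. It is symmetric under $p\leftrightarrow q$, tends to $0$ at both endpoints, and one checks by differentiation that $p=\tfrac12$ is the unique interior stationary point, hence the global maximum, with value $\tfrac43\ln2$. Since for $p\neq\tfrac12$ the leading coefficient is strictly smaller (while the zero of $1-p^s-q^s$ controlling the main order stays at $s=1$ by $p+q=1$), the asymptotic throughput is maximized at $p=q=\tfrac12$.

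The main obstacle is the analytic justification of the contour shift: I must bound the Rice kernel times $a(s)$ on the shifted line and show the remainder integral is $o(n)$, while simultaneously proving that the residue series $\sum_{k\neq0}$ converges and may be collected term by term. This is delicate precisely because infinitely many poles $s_k$ accumulate along $\mathrm{Re}(s)=1$, so the estimates must exploit the exponential decay of $\Gamma$ along vertical lines together with a matching lower bound on $|1-p^s-q^s|$ away from its zeros, in order to dominate both the series and the tail of the integral. This is the step I would borrow most directly from the Mellin/Rice framework of \cite{mathys1985q}.
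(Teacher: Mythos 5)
Your proposal is correct, but it follows a genuinely different route from the paper. The paper starts from the same closed form \eqref{EquationClosedFormAppendix} but then \emph{splits} the numerator: the $(i-1)$ part is dispatched by quoting the known SICTA asymptotics \cite[Eq.~34]{yu2007high} (their Equation \eqref{Eq:OldPart}), while the new $r\,i(i-1)/2$ part is handled by expanding $(1-p^i-q^i)^{-1}$ as a geometric series over compositions, approximating $(1-p(\mu))^{n-2}$ by $e^{-np(\mu)}$, and then running a Mellin-transform/harmonic-sum argument on $x^2e^{-x}$ to arrive at the integral $\frac{1}{2\pi i}\int \frac{n^{-s}\Gamma(s+1)}{1-p^{-s}-q^{-s}}\,ds$ and the residue evaluation \eqref{Eq:NewPart}; combining the two pieces gives \eqref{Eq:FinalResult}. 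You instead treat the whole alternating sum at once via a N\"orlund--Rice integral, and your two structural observations --- that the numerator factors as $(s-1)(1+rs/2)$ so that $s=1$ is a removable singularity of $a(s)$ with $a(1)=\frac{1+r/2}{-p\ln p-q\ln q}$, and that the main term then comes from the \emph{kernel} pole at $s=1$ while the complex zeros $s_k$ of $1-p^s-q^s$ supply the fluctuation --- reproduce exactly the paper's leading constant and oscillation in one pass. What your route buys: it is self-contained (no appeal to \cite{yu2007high}), it avoids the exponential-approximation step, and it makes the tiny amplitude of $g(n)$ quantitative through the decay $|\Gamma(1-s_k)|\sim e^{-\pi|\mathrm{Im}\,s_k|/2}$. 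What the paper's route buys: it reuses established results so that only the genuinely new sum requires analysis, and it stays within the Mellin framework standard in the tree-algorithm literature (the two methods are two faces of the same machinery, cf.~\cite{FLAJOLET19953}, which the paper itself cites). Your optimality argument (calculus on $\phi(p)=1/a(1)$) is the same computation as the paper's final rewriting of the leading term of $L_n/n$; both assert rather than fully verify the uniqueness of the stationary point, and both defer the same hard analytic estimates (contour bounds, convergence of the residue series, justification of term-by-term collection) to \cite{mathys1985q}, so your proposal's acknowledged gap is no larger than the paper's own.
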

The proof is given in Appendix B, also providing an asymptotic expansion for all values of $p$, see \eqref{Eq:NewPart}.

The asymptotics of $T_n$ can be used to derive the \ac{MST} for the case of the blocked access using the techniques from \cite{mathys1985q} or \cite{yu2007high}, where $\ac{MST} = \lim_{\nuser \rightarrow \infty} T_{n}$.
Comparing the \ac{MST} of the proposed tree algorithm with the one of \ac{SICTA} ($\log(2)$), there is a gain of one-third.


\subsection{Windowed Access}
\label{subsec:windowed_access}

In windowed access, the packets that start the $k$-th \ac{CRI} have arrived during the windowed interval $(k\Delta, (k+1)\Delta)$, where $\Delta$ is the window size 
that is optimized for the arrival rate. 
Using windowed access makes sense only if it supports a higher arrival rate than gated access. 

From Fig.~\ref{fig:tpt} we see that the conditional throughput is 1 for $0<\nuser\leq 2$.
This high efficiency for a small number of users hints to the possibility that if we can restrict most \ac{CRI}s to start with $n < 3$ by using an optimized window size, our algorithm might perform better.
Indeed, this is the case for \ac{BTA}~\cite{rom2012multiple} and tree algorithms with \ac{MPR}~\cite{ceda2020, ceda2021}, where the of the \ac{CRP} is higher for smaller $\nuser$, such that windowed access pushes the \ac{MST} to be higher than that of gated access.
For SICTA, the efficiency is in fact lower for smaller $\nuser$ and hence windowed access does not improve the \ac{MST} over blocked access~\cite{yu2007high}. 

We use the method in~\cite{rom2012multiple} to numerically find the optimal window size and the corresponding \ac{MST} for our algorithm.
The probability that $\nuser$ packets arrive in window $\Delta$ when the arrival rate is Poisson distributed with mean $\lambda$ packets per slot is then
\begin{equation}
    \Pr\{ N = \nuser \} = \frac{\left(\lambda \Delta\right)^n }{\nuser!}e^{ - \lambda \Delta}.
\end{equation}
The expected \ac{CRI} length conditioned on the window size and packet arrival rate is, 
\begin{equation}
    \ecri(\lambda\Delta) = \E \{ \ecri_\nuser | \lambda \Delta \} = \sum_{\nuser=0}^\infty   \ecri_\nuser \frac{(\lambda\Delta)^\nuser}{\nuser!} e^{-\lambda\Delta}.
\end{equation}

For the \ac{RA} scheme to be stable, the mean \ac{CRI} length has to be less than the window size $\Delta$.\footnote{$\E \{ \ecri^2_\nuser \} < \infty$ also holds, but we omit the proof which can be done following \cite{rom2012multiple}.} 
Thus, 
\begin{equation}
    L(\lambda\Delta) < \Delta.
    \label{eq:pakes_lemma}
\end{equation}
Rewriting the above equation, we get the mean arrival rate for which the \ac{RA} scheme will be stable, if $\lambda < \frac{\lambda\Delta}{L(\lambda\Delta)}$.
Fig.~\ref{fig:windowed} plots the function $\frac{\lambda\Delta}{L(\lambda\Delta)}$ for different values of $\lambda\Delta$.
The function increases to $\frac{4}{3}\ln2$ as $\lambda\Delta$ grows, implying that the optimal window size (and thus the product $\lambda \Delta$) tend to infinity.
Thus, using windowed access does not provide benefits in terms of the \ac{MST} in comparison to gated access. 

\begin{figure}
    \centering
    \input{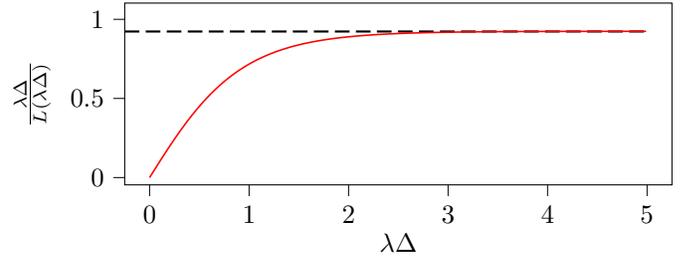}
    \caption{Throughput for windowed access with \ac{ATIC} as function of the expected number of arrivals per window $\lambda \Delta$. 
    We see that gated access is the best CAP to be used along with \ac{ATIC}. }
    \label{fig:windowed}
\end{figure}

\subsection{Delay Analysis}
\label{subsec:delay_analysis}

Similar to \cite{yu2007high,molle1992computation}, one can use the moment generating function from Equation \eqref{EquationMGF} to approximate the average delay experienced by each packet. 
As the delay does not exhibit a closed-form solution (see \cite[Section V.C]{yu2007high}), we numerically simulate the delay using the method provided in \cite{molle1992computation}.
The mean packet delay for \ac{BTA}, \ac{SICTA} and \ac{ATIC} for different mean packet arrival rates $\lambda$ is shown in Fig.~\ref{fig:Delay_figure}. 
We see that \ac{ATIC} significantly reduces the delay for arrival rates higher than 0.5 packets per slot. 
For example, at an arrival rate of $\l=0.5$, \ac{SICTA} provides a mean packet delay of 1.7 slots while \ac{ATIC} gives a mean packet delay of 1 slot only. 
This difference gets more and more pronounced as the packet arrival rate approaches the \ac{MST} of \ac{SICTA}. 
At $\l=0.693$ packets/slot, the delay of the \ac{SICTA} scheme becomes unbounded. 
For $\l$ at 95 \% of their respective \ac{MST}, \ac{SICTA} gives a mean packet delay of 12.3 slots while \ac{ATIC} gives a mean packet delay of 10.2 slots. 
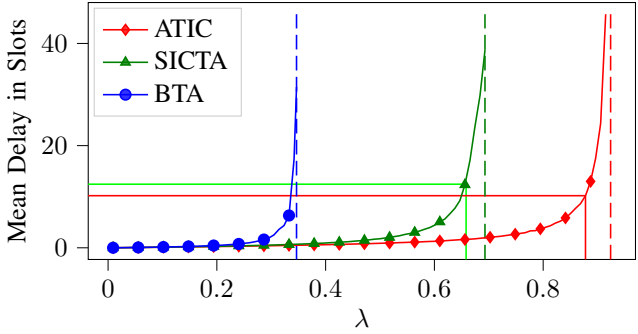
\begin{figure}
    \centering
\begin{tikzpicture}

\definecolor{color0}{rgb}{0.12156862745098,0.466666666666667,0.705882352941177}

\begin{axis}[
width={\columnwidth}, height={5cm},
tick align=outside,
legend cell align={left},
legend style={
  fill opacity=0.8,
  draw opacity=1,
  text opacity=1,
  at={(0.01,0.98)},
  anchor=north west,
  draw=white!80!black
},
tick pos=left,
x grid style={white!69.0196078431373!black},
xmin=-0.03649593947223, xmax=0.96973790187963,
xtick style={color=black},
xlabel={$\lambda$},
ylabel={Mean Delay in Slots},
ylabel style={yshift=-0.9em},
y grid style={white!69.0196078431373!black},
ymin=-2.28615048, ymax=48.00916008,
ytick style={color=black}
]
\path [draw=red, semithick, dash pattern=on 5.55pt off 2.4pt]
(axis cs:0.924,-2.28615048)
--(axis cs:0.924,45.7230096);

\path [draw=green, semithick]
(axis cs:0.6583,-2.28615048)
--(axis cs:0.6583,12.4494496);

\path [draw=red, semithick]
(axis cs:0.8778,-2.28615048)
--(axis cs:0.8778,10.2);

\path [draw=green, semithick]
(axis cs:-0.03649593947223,12.4494496)
--(axis cs:0.6583,12.4494496);

\path [draw=red, semithick]
(axis cs:-0.03649593947223,10.2)
--(axis cs:0.8778,10.2);

\path [draw=green!50.1960784313725!black, semithick, dash pattern=on 5.55pt off 2.4pt]
(axis cs:0.693,-2.28615048)
--(axis cs:0.693,45.7230096);

\path [draw=blue, semithick, dash pattern=on 5.55pt off 2.4pt]
(axis cs:0.3465,-2.28615048)
--(axis cs:0.3465,45.7230096);

\addplot [semithick, red, mark=diamond*, mark repeat={5}]
table {%
0.0092419624074 0.0092448
0.0184839248148 0.0188912
0.0277258872222 0.0286192
0.0369678496296 0.0382752
0.046209812037 0.0487088
0.0554517744444 0.0582752
0.0646937368518 0.0692816
0.0739356992592 0.0792624
0.0831776616666 0.0896448
0.092419624074 0.101984
0.1016615864814 0.112704
0.1109035488888 0.1227648
0.1201455112962 0.1359632
0.1293874737036 0.1467936
0.138629436111 0.1576128
0.1478713985184 0.1702272
0.1571133609258 0.1820448
0.1663553233332 0.1963152
0.1755972857406 0.2083632
0.184839248148 0.221136
0.1940812105554 0.2356592
0.2033231729628 0.2490784
0.2125651353702 0.2631952
0.2218070977776 0.2742224
0.231049060185 0.2914416
0.2402910225924 0.3044736
0.2495329849998 0.3186864
0.2587749474072 0.3339808
0.2680169098146 0.3500096
0.277258872222 0.3642608
0.2865008346294 0.3798256
0.2957427970368 0.3968816
0.3049847594442 0.4117712
0.3142267218516 0.430704
0.323468684259 0.4469232
0.3327106466664 0.4657648
0.3419526090738 0.485216
0.3511945714812 0.504648
0.3604365338886 0.5211024
0.369678496296 0.5413968
0.3789204587034 0.5593808
0.3881624211108 0.5808704
0.3974043835182 0.6012352
0.4066463459256 0.6225232
0.415888308333 0.6489552
0.4251302707404 0.6717184
0.4343722331478 0.6913824
0.4436141955552 0.7148992
0.4528561579626 0.7423984
0.46209812037 0.7672656
0.4713400827774 0.7916048
0.4805820451848 0.814312
0.4898240075922 0.8485856
0.4990659699996 0.878184
0.508307932407 0.906424
0.5175498948144 0.9395104
0.5267918572218 0.9677584
0.5360338196292 1.0125152
0.5452757820366 1.041576
0.554517744444 1.0822608
0.5637597068514 1.1088848
0.5730016692588 1.1486688
0.5822436316662 1.201536
0.5914855940736 1.2405664
0.600727556481 1.2692176
0.6099695188884 1.3491456
0.6192114812958 1.3850016
0.6284534437032 1.4505232
0.6376954061106 1.5185936
0.646937368518 1.5860144
0.6561793309254 1.625696
0.6654212933328 1.6865152
0.6746632557402 1.7555728
0.6839052181476 1.860536
0.693147180555 1.951592
0.7023891429624 2.06196
0.7116311053698 2.1554864
0.7208730677772 2.240128
0.7301150301846 2.3836176
0.739356992592 2.523976
0.7485989549994 2.6630448
0.7578409174068 2.8170416
0.7670828798142 2.9541536
0.7763248422216 3.31312
0.785566804629 3.352552
0.7948087670364 3.7134688
0.8040507294438 4.002408
0.8132926918512 4.322376
0.8225346542586 4.83512
0.831776616666 5.2147488
0.8410185790734 5.8485824
0.8502605414808 6.708688
0.8595025038882 7.7217056
0.8687444662956 8.6483664
0.877986428703 10.2321776
0.8872283911104 12.9962064
0.8964703535178 17.5312752
0.9057123159252 24.3512416
0.9149542783326 45.7230096
};
\addlegendentry{ATIC}
\addplot [semithick, green!50.1960784313725!black, mark=triangle*, mark repeat={5}]
table {%
0.0092419624074 0.0094
0.0184839248148 0.019128
0.0277258872222 0.0291696
0.0369678496296 0.0393392
0.046209812037 0.0499952
0.0554517744444 0.0619248
0.0646937368518 0.0725408
0.0739356992592 0.0852224
0.0831776616666 0.0967984
0.092419624074 0.110224
0.1016615864814 0.1225664
0.1109035488888 0.1347472
0.1201455112962 0.1492864
0.1293874737036 0.1654592
0.138629436111 0.177296
0.1478713985184 0.192656
0.1571133609258 0.2096096
0.1663553233332 0.22792
0.1755972857406 0.2417376
0.184839248148 0.2582976
0.1940812105554 0.276584
0.2033231729628 0.2938576
0.2125651353702 0.3160992
0.2218070977776 0.3390656
0.231049060185 0.3594352
0.2402910225924 0.3810368
0.2495329849998 0.3968368
0.2587749474072 0.4206928
0.2680169098146 0.4463056
0.277258872222 0.4695584
0.2865008346294 0.4964304
0.2957427970368 0.5251312
0.3049847594442 0.5525776
0.3142267218516 0.582488
0.323468684259 0.610456
0.3327106466664 0.6468704
0.3419526090738 0.6841488
0.3511945714812 0.7134208
0.3604365338886 0.7505952
0.369678496296 0.7967024
0.3789204587034 0.842632
0.3881624211108 0.888512
0.3974043835182 0.9228688
0.4066463459256 0.9884656
0.415888308333 1.0359392
0.4251302707404 1.0886768
0.4343722331478 1.1427984
0.4436141955552 1.20968
0.4528561579626 1.2975168
0.46209812037 1.3757696
0.4713400827774 1.457432
0.4805820451848 1.5614208
0.4898240075922 1.683096
0.4990659699996 1.7591808
0.508307932407 1.8973936
0.5175498948144 2.0427728
0.5267918572218 2.2042752
0.5360338196292 2.356952
0.5452757820366 2.6039232
0.554517744444 2.7838864
0.5637597068514 3.0178
0.5730016692588 3.2981824
0.5822436316662 3.6420224
0.5914855940736 3.9563376
0.600727556481 4.4415136
0.6099695188884 5.0763008
0.6192114812958 5.6341648
0.6284534437032 6.6536672
0.6376954061106 7.7324016
0.646937368518 9.2582688
0.6561793309254 12.3494496
0.6654212933328 17.1177168
0.6746632557402 23.79392
0.6839052181476 29.8387088
0.693147180555 38.8399344
};
\addlegendentry{SICTA}
\addplot [semithick, blue, mark=*, mark repeat={5}]
table {%
0.0092419624074 0.0094944
0.0184839248148 0.0199488
0.0277258872222 0.030728
0.0369678496296 0.0427264
0.046209812037 0.0556784
0.0554517744444 0.0691904
0.0646937368518 0.0808432
0.0739356992592 0.0968064
0.0831776616666 0.1134032
0.092419624074 0.129304
0.1016615864814 0.1469872
0.1109035488888 0.16832
0.1201455112962 0.1873168
0.1293874737036 0.2147664
0.138629436111 0.2287584
0.1478713985184 0.2659072
0.1571133609258 0.2921056
0.1663553233332 0.3196336
0.1755972857406 0.3574416
0.184839248148 0.39148
0.1940812105554 0.4429024
0.2033231729628 0.4801424
0.2125651353702 0.528072
0.2218070977776 0.6056864
0.231049060185 0.6634192
0.2402910225924 0.751208
0.2495329849998 0.875264
0.2587749474072 0.9736528
0.2680169098146 1.140512
0.277258872222 1.3532064
0.2865008346294 1.6133024
0.2957427970368 1.934056
0.3049847594442 2.5201296
0.3142267218516 3.6050208
0.323468684259 4.3525648
0.3327106466664 6.3237168
0.3419526090738 17.3315024
0.3465 31.4651312
};
\addlegendentry{BTA}
\end{axis}

\end{tikzpicture}
    \caption{The mean packet delay of BTA, SICTA, and \ac{ATIC}. The packets arrive in the network independently with a Poisson distribution with mean $\lambda$. Even for arrival rates theoretically supported by SICTA in the interval (0.6, 0.693] packets per slot, the delay becomes very large. In such cases, \ac{ATIC} provides a much lower packet delay.}
    \label{fig:Delay_figure}
\end{figure}

\section{Feedback and Memory Requirements}
\label{sec:evaluation}

\subsection{Feedback Requirements}
\label{subsec:feedback_resources}
The throughput gain in the proposed algorithm comes at the cost of requiring more resources for the downlink broadcast feedback.
Here we assess the number of resources required for the feedback for \ac{BTA}, \ac{SICTA}, and \ac{ATIC}. 
We note that this type of analysis is by default neglected in the available literature.
We also note that the required feedback resources are by default not included in the throughput calculation, which is the case in this paper as well.

For \ac{BTA}, the ternary feedback would need 2 bits/slot.
Theoretically, for \ac{SICTA}, the feedback message upon a success that contains information about how many slots (i.e., $k$) should be skipped can be any positive integer.
To estimate the number of bits one would need in practice for \ac{SICTA} in the broadcast feedback, we simulated \ac{SICTA} for the maximum supported arrival rate (at \ac{MST}). 
Fig.~\ref{fig:feedback_distr} shows the simulated probability mass distribution for $k$. 
It can be observed that the value of $k$ did not exceed 9, leading us to conclude that in practice 4 bits for the broadcast feedback in case of \ac{SICTA} can be used to represent all required feedback messages with a high probability.

In the case of \ac{ATIC}, the entire received signal must be broadcast in the feedback by the \ac{AP} in the case of a collision. 
It is reasonable to assume that the number of bits required for the feedback can be approximated by the packet size in bits $B$, and in practice, it holds that $B \gg 4$.
Moreover, in \ac{ATIC}, the feedback requirements increase with the packet size.
Nevertheless, uplink and downlink channels are of comparable capacity in a multitude of wireless cellular technologies (e.g., LTE), so this type of requirement could effectively be supported in practice.

\begin{figure}[t]
    \centering
\begin{tikzpicture}

\definecolor{color0}{rgb}{0.12156862745098,0.466666666666667,0.705882352941177}

\begin{axis}[
width={\columnwidth}, height={5cm},
tick align=outside,
tick pos=left,
x grid style={white!69.0196078431373!black},
xlabel={Feedback $k$ after a successful slot},
xmajorgrids,
xmin=0, xmax=10,
xtick style={color=black},
xtick={0,1,2,3,4,5,6,7,8,9,10},
xticklabels={0,1,2,3,4,5,6,7,8,9,10},
y grid style={white!69.0196078431373!black},
ymajorgrids,
ymin=0, ymax=0.324216765453006,
ytick style={color=black},
ytick={0,0.05,0.1,0.15,0.2,0.25,0.3,0.35},
yticklabels={0.00,0.05,0.10,0.15,0.20,0.25,0.30,0.35}
]
\draw[draw=none,fill=color0] (axis cs:-0.5,0) rectangle (axis cs:0.5,0);

\draw[draw=none,fill=color0] (axis cs:0.5,0) rectangle (axis cs:1.5,0.30002822466836);
\draw[draw=none,fill=color0] (axis cs:1.5,0) rectangle (axis cs:2.5,0.308777871860006);
\draw[draw=none,fill=color0] (axis cs:2.5,0) rectangle (axis cs:3.5,0.213660739486311);
\draw[draw=none,fill=color0] (axis cs:3.5,0) rectangle (axis cs:4.5,0.105278012983347);
\draw[draw=none,fill=color0] (axis cs:4.5,0) rectangle (axis cs:5.5,0.0471351961614451);
\draw[draw=none,fill=color0] (axis cs:5.5,0) rectangle (axis cs:6.5,0.0166525543324866);
\draw[draw=none,fill=color0] (axis cs:6.5,0) rectangle (axis cs:7.5,0.00649167372283376);
\draw[draw=none,fill=color0] (axis cs:7.5,0) rectangle (axis cs:8.5,0.00112898673440587);
\draw[draw=none,fill=color0] (axis cs:8.5,0) rectangle (axis cs:9.5,0.000846740050804403);
\end{axis}

\end{tikzpicture}
    \caption{Simulated probability mass distribution of the feedback value $k$ conditioned that the slot was a success, obtained for SICTA with blocked access for 100000 slots. The packet arrival rate $\lambda$ is 0.693, which is the \ac{MST} of the scheme. It can be observed that the value of $k$ did not exceed 9.}
    \label{fig:feedback_distr}
\end{figure}
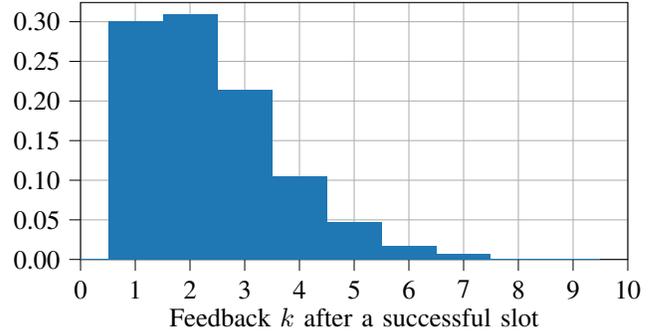

\subsection{Memory Requirements}
\label{subsec:memory_requirements}

In both SICTA and \ac{ATIC}, all collisions need to be stored in the \ac{AP}s memory. 
Following the methods given in~\cite{vogel2023analysis}, one can show that for gated access, the $C_n/L_n \sim \tfrac{1}{2}$, where $C_n$ is the expected number of collisions given $n$ packets in the initial collision. Hence, we have for ATIC 
\begin{equation}
    \frac{C_n}{n}\sim \frac{3}{8}\frac{1}{\ln(2)}\, ,
\end{equation}
which is significantly less than the $\frac{C_n}{n}\sim (2\log(2))^{-1}$ for SICTA. 
During a \ac{CRI}, the \ac{AP} needs to hold all the collisions in its memory. 
Every resolved packet is then subtracted from every saved collision to attempt \ac{IC}. 
Hence the number, of collisions in a CRI determines the memory requirements at the \ac{AP} in slots.
Thus on average, the required memory capacity of \ac{ATIC} is smaller than that of \ac{SICTA} for the same number of users starting the contention. 

Figure \ref{fig:memory_req} shows the CDF of the collisions per \ac{CRI} obtained for \ac{ATIC} and \ac{SICTA} by simulating gated access over 100000 slots. 
When the packet arrival rate, $\lambda$ is 0.693 packets/slot (\ac{MST} of \ac{SICTA}), ATIC needed a memory capacity of 7 slots.
In comparison, SICTA needed a memory capacity of 50 slots. 
On the other hand, when the arrival rate is 0.924 packets/slot (\ac{MST} of \ac{ATIC}), the required memory capacity is comparable to SICTA with $\lambda$ at 0.693 packets/slot. 
Thus, \ac{ATIC} does not require more memory capacity than \ac{SICTA} when the arrival rates are close to their respective \ac{MST}s.
When the arrival rates are the same, \ac{ATIC} needs a much smaller memory capacity. 

Note that, \ac{ATIC} also requires the users to hold 2 signals in their memory, namely their own and the received collision feedback.
In the case of \ac{SICTA} and \ac{BTA} they need to hold just 1 signal, i.e., their own. 

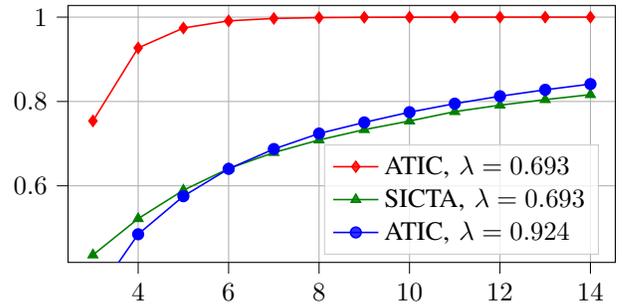
\begin{figure}
    \centering
\begin{tikzpicture}

\begin{axis}[
width={\columnwidth}, height={5cm},
legend cell align={left},
legend style={
  fill opacity=0.8,
  draw opacity=1,
  text opacity=1,
  at={(0.97,0.03)},
  anchor=south east,
  draw=white!80!black
},
tick align=outside,
tick pos=left,
x grid style={white!69.0196078431373!black},
xmajorgrids,
xmin=2.45, xmax=14.55,
xtick style={color=black},
y grid style={white!69.0196078431373!black},
ymajorgrids,
ymin=0.4184014825, ymax=1.0276951675,
ytick style={color=black}
]
\addplot [semithick, red, mark=diamond*]
table {%
3 0.7538172
4 0.9269991
5 0.97418783
6 0.99119567
7 0.99691985
8 0.99888883
9 0.99959354
10 0.99985371
11 0.99994939
12 1
13 1
14 1
};
\addlegendentry{ATIC, $\lambda = 0.693$}
\addplot [semithick, green!50.1960784313725!black,  mark=triangle*]
table {%
3 0.4360211
4 0.52213719
5 0.58961844
6 0.64094627
7 0.67889978
8 0.70870701
9 0.73321851
10 0.75367653
11 0.77590692
12 0.79124793
13 0.804391
14 0.81623966
};
\addlegendentry{SICTA, $\lambda = 0.693$}
\addplot [semithick, blue, mark=*]
table {%
3 0.330718
4 0.48515091
5 0.57540216
6 0.64022817
7 0.68724978
8 0.72401869
9 0.75037689
10 0.77463549
11 0.79486459
12 0.81243591
13 0.82771885
14 0.8412671
};
\addlegendentry{ATIC, $\lambda = 0.924$}
\end{axis}

\end{tikzpicture}
    \caption{The empirical cumulative distribution function of collisions per CRI for SICTA and ATIC obtained by simulation with gated access over 100000 slots. Since the AP has to keep every collision in a CRI in its memory, this value is indicative of the number of slots the AP needs to hold in its memory. When the arrival rate $\lambda = 0.693$, ATIC needs at the most 7 slots which is much less than the 50 needed by SICTA. When the arrival rate $\lambda= 0.924$, the required memory for ATIC is comparable to SICTA with $\lambda = 0.693$.}
    \label{fig:memory_req}
\end{figure}


\section{Conclusion}
\label{sec:conclusion}

In this paper, we have shown that by broadcasting the received composite signal and leveraging \ac{IC} at the user side, we can increase the \ac{MST} to 0.924.
This is an improvement over \ac{SICTA} by one-third and rather close to the absolute limit of 1 for the collision channel model.
Our proposed scheme also lowers the average packet delay and required memory capacity at the \ac{AP}.

The requirement of enhanced feedback limits the practical applicability of the scheme to systems with frequent and adept feedback channels and/or with short packet sizes.
An example of the latter is mobile cellular systems, such as LTE, where the capabilities and scheduling of the downlink and uplink channels are balanced by default.   



\section*{Acknowledgements}

The work of \v C. Stefanovi\' c was supported by the European Union’s Horizon 2020 research and innovation programme under Grant Agreement number 883315. 
Q. Vogel would like to thank Silke Rolles for funding part of this research. Y. Deshpande's work was supported by the
Bavarian State Ministry for Economic Affairs, Regional Development and Energy (StMWi) project KI.FABRIK under grant no. DIK0249.

\section*{Appendix}\label{Sec:Appendix}
\subsection{Proof of Theorem~\ref{Thm:ClosedExpLn}}\label{subsec:proofthm1}
We prove the statement by deriving a functional equation for the moment generating function and calculate the mean by taking the derivative.

Write for $n\ge 0$ and $Q_n(z)=\E\ek{z^{l_n}}$, where $z\in \C$. Equation \eqref{eq:cri_evol} gives
\begin{equation}
    Q_0(z)=Q_1(z)=z\quad\text{and}\quad Q_2(z)=z^2\, .
\end{equation}
Note that $i$, the number of packets choosing the left slot, follows a binomial distribution with parameter $p$. By conditioning on $i$ and using Equation \eqref{eq:cri_evol}, we get for $n\ge 3$
\begin{equation}
    Q_n(z)=\E\ek{z^{l_n}}=\sum_{i=0}^n\binom{n}{i}p^i\rk{1-p}^{n-i}Q_i(z)Q_{n-i}(z)\, .
\end{equation}
Write now $q=1-p$, for brevity. Using the two equations above, we get for the Poisson moment generating function $Q(x,z)=\sum_{n\ge 0}x^n Q_n(z)/n!$ that
\begin{multline}\label{EquationMGF}
    Q(x,z)=(1+x)z+\frac{x^2z^2}{2}\\
    +\sum_{n\ge 3}\sum_{i=0}^n\rk{\frac{Q_i(z)(px)^i}{i!}}\rk{\frac{Q_{n-i}(z)(xq)^{n-i}}{(n-i)!}}\, .
\end{multline}
Note that the final term can be rewritten as
\begin{multline}
    \sum_{n\ge 0}\sum_{i=0}^n\rk{\frac{Q_i(z)(px)^i}{i!}}\rk{\frac{Q_{n-i}(z)(xq)^{n-i}}{(n-i)!}}\\-z^2-z^2x-x^2\rk{z^3p^2/2+z^2pq+z^3q^2/2}\, ,
\end{multline}
by adding and subtracting the terms for $n=0,1,2$. The previous two equations give
\begin{multline}\label{EquationFunRelQ}
  Q(x,z)=Q(px,z)Q(qx,z)+\rk{z-z^2}\\+x\rk{z-z^2}+\frac{x^2}{2}\rk{z^2-2z^2pq-z^3\rk{p^2+q^2}}\, .  
\end{multline}
The moment generating function of $L_n$ is given by $L(x)=\ex^{-x}\sum_{n\ge 0}x^n L_n/n!$. By differentiation, one obtains that $L(x)=\ex^{-x}\frac{\d Q}{\d z}(x,1)$. Hence, Equation \eqref{EquationFunRelQ} gives
\begin{multline}
    L(x)=L(px)+L(qx)\\-\ex^{-x}\rk{1+x-\frac{x^2}{2}\rk{2-4pq-3\rk{p^2+q^2}}}\, .  
\end{multline}
Set $r=\rk{2-4pq-3\rk{p^2+q^2}}$. We then have that for $L(x)=\sum_{n\ge 0}\alpha_n x^n$ and $n\ge 3$
\begin{equation}
    \alpha_n=\frac{1}{n!}\frac{\rk{-1}^n\rk{n-1+rn(n-1)/2}}{1-p^n-q^n}\, ,
\end{equation}
by coefficient comparison. Note that by the definition of $l_n$,$\alpha_0=1$, $\alpha_1=0$ and $ \alpha_2=1/2$. Using that $L_n=\sum_{i=0}^n \alpha_i n!/(n-i)!$ immediately gives
\begin{equation}
    L_n=1+\frac{n(n-1)}{2}+\sum_{i= 3}^n\binom{n}{i}\frac{\rk{-1}^i\rk{i-1+ri(i-1)/2}}{1-p^i-q^i}\, .
\end{equation}
We rewrite the right-hand side as
\begin{multline}
    1+\frac{n(n-1)}{2}\rk{1-\frac{1+r}{1-p^2-q^2}}+\\ \sum_{i= 2}^n\binom{n}{i}\frac{\rk{-1}^i\rk{i-1+ri(i-1)/2}}{1-p^i-q^i}\, .
\end{multline}
Note that $1-\frac{1+r}{1-p^2-q^2}=2\frac{-1+(p+q)^2}{1-p^2-q^2}=0$. Hence
\begin{equation}\label{EquationClosedFormAppendix}
      L_n=1+\sum_{i= 2}^n\binom{n}{i}\frac{\rk{-1}^i\rk{i-1+ri(i-1)/2}}{1-p^i-q^i}\, .
\end{equation}
 This concludes the proof of Theorem \ref{Thm:ClosedExpLn}.\qed
\subsection{Proof of Theorem~\ref{Thm:Asympto}}\label{subsec:proofthm2}
The starting point of our asymptotic analysis of $L_n$ is Equation \eqref{EquationClosedFormAppendix}. Note that by \cite[Equation 34]{yu2007high}
\begin{equation}\label{Eq:OldPart}
    \sum_{i= 2}^n\binom{n}{i}\frac{\rk{-1}^i\rk{i-1}}{1-p^i-q^i}\sim n\frac{1}{-p\log(p)-q\log(q)}+ng_1(n)\, ,
\end{equation}
where $g_1(n)$ is a small fluctuation as described in Theorem~\ref{Thm:Asympto}.

We now analyse the remaining part of the sum:
\begin{equation}
    \frac{r}{2}\sum_{i= 2}^n\binom{n}{i}\frac{\rk{-1}^i{i(i-1)}}{1-p^i-q^i}\, .
\end{equation}
As $r/2$ is a linear factor, we neglect it for now and multiply it back on later.

By differentiation for the binomial theorem, one obtains
\begin{equation}
    \sum_{i=2}^n\binom{n}{i}i(i-1)x^i=x^2n(n-1)\ek{1+x}^{n-2}\, .
\end{equation}
We write $\Pfrak_m\hk{2}=\gk{\mu_1,\mu_2\in\N\cup\gk{0}\colon \mu_1+\mu_2=m}$. We also abbreviate $p(\mu)=p^{\mu_1}q^{\mu_2}$. Using the geometric series, we get that
\begin{multline}
    \sum_{i= 2}^n\binom{n}{i}\frac{\rk{-1}^i\rk{i(i-1)}}{1-p^i-q^i}\\
    =\sum_{m\ge 0}\sum_{\mu\in\Pfrak_m\hk{2}}\binom{m}{\mu}p(\mu)^2n(n-1)\rk{1-p(\mu)}^{n-2}\, .
\end{multline}
Using a similar reasoning to \cite{mathys1985q}, we extract the leading term
\begin{equation}
    \sum_{m\ge 0}\sum_{\mu\in\Pfrak_m\hk{2}}\binom{m}{\mu}p(\mu)^2n^2\rk{1-p(\mu)}^{n}\, .
\end{equation}
Using the same reference again, we write this as
\begin{equation}\label{eq:sumBeforeMellin}
    (1+o(1))\sum_{m\ge 0}\sum_{\mu\in\Pfrak_m\hk{2}}\binom{m}{\mu}p(\mu)^2n^2\ex^{-n p(\mu)}\, .
\end{equation}
Recall that for a function $f$, its Mellin transform (see \cite{FLAJOLET19953}) is given by
\begin{equation}
    \Mcal\ek{f(x);s}=\int_0^\infty x^{s-1}f(x)\d x
\end{equation}
with inverse transform
\begin{equation}
    f(x)=\frac{1}{2\pi i}\int_{c-i\infty}^{c+i\infty}x^{-s}\Mcal\ek{f(x);s}\d s\, ,
\end{equation}
for some $c\in\R$. For $f(x)=x^2\ex^{-x}$, we have that it is 
\begin{equation}
    \Mcal\ek{f(x);s}=\Gamma(s+1)\quad\text{if}\quad\Re(s)>-2\, .
\end{equation}
Hence, we get that the sum in Equation \eqref{eq:sumBeforeMellin} can be rewritten as
\begin{equation}
     \sum_{m\ge 0}\sum_{\mu\in\Pfrak_m\hk{2}}\binom{m}{\mu}\frac{1}{2\pi i}\int_{-3/2-i\infty}^{3/2+i\infty}n^{-s}p(\mu)^{-s}\Gamma(s+1)\d s\, .
\end{equation}
We use the geometric sum again to rewrite the above as
\begin{equation}
    \frac{1}{2\pi i}\int_{-3/2-i\infty}^{3/2+i\infty}\frac{n^{-s}\Gamma(s+1)}{1-p^{-s}-q^{-s}}\d s\, .
\end{equation}
Using the reside theorem again as in \cite{mathys1985q}, we get the the above integral is given by
\begin{equation}\label{Eq:NewPart}
    n\rk{\frac{1}{-p\log(p)-q\log(q)}+g_2(n)}\, ,
\end{equation}
where $g_2(n)$ is again small and fluctuating. Combining Equation \eqref{Eq:OldPart} and Equation \eqref{Eq:NewPart}, we get that
\begin{equation}\label{Eq:FinalResult}
    \frac{L_n}{n}=\frac{1+r/2}{-p\log(p)-q\log(q)}+g_3(n)\, .
\end{equation}
In the case of fair splitting, we get that $r=-1/2$ and hence 
\begin{equation}
    \frac{n}{L_n}\quad\text{has leading term}\quad\frac{4}{3}\log(2)\approx 0.9242\, .
\end{equation}
Using \cite{fayolle1986functional}, we note that $g_3(n)$ is zero if and only if $\log\rk{ p/q}$ is irrational. Note that by expanding $r$, the leading term in Equation \eqref{Eq:FinalResult} can be written as
\begin{equation}
    \frac{p^2 - p - 0.5}{(1 - p)\log(1 - p) + p\log(p)}\, ,
\end{equation}
and is hence minimized at $p=1/2$. This concludes the proof of Theorem \ref{Thm:Asympto}.\qed
\bibliographystyle{IEEEtran}
\bibliography{Bibliography}

\end{document}